\newtheorem{lemma}{Lemma}
\newtheorem{observation}{Observation}
\newcommand{\exs}[1]{\ensuremath{\langle{#1}\rangle}}
\newcommand{\CZ}[2]{\text{CZ}_{(#1,#2)}}
\newcommand{\B}[2]{\mathcal{B}_{#1}^{\text{#2}}}
\newcommand{\Q}[2]{Q_{#1}^{\text{#2}}}
\newcommand{\C}[2]{C_{#1}^{\text{#2}}}
\newcommand{\D}[2]{D_{#1}^{\text{#2}}}
\newcommand{\Es}[1]{\ensuremath{\mathbb{E}[#1]}}                
\newcommand{\Prob}[1]{\ensuremath{\mathbb{P}\left(#1\right)}}      
\newcommand{\Probs}[1]{\ensuremath{\mathbb{P}(#1)}}                
\newcommand{\Id}{\mathbbm{1}}
\newcolumntype{C}{>{\centering\arraybackslash}X}
\begin{document}


\title{Generating multipartite nonlocality to benchmark quantum computers}


\author{Jan Lennart B\"onsel}
\email{jan.boensel@uni-siegen.de}
\author{Otfried Gühne}
\email{otfried.guehne@uni-siegen.de}
\affiliation{Naturwissenschaftlich-Technische Fakult\"at, Universit\"at Siegen, Walter-Flex-Stra{\ss}e 3, 57068 Siegen, Germany}

\author{Ad\'an Cabello}
\email{adan@us.es}
\affiliation{Departamento de F\'{\i}sica Aplicada II, Universidad de Sevilla, 41012 Sevilla, Spain}
\affiliation{Instituto Carlos~I de F\'{\i}sica Te\'orica y Computacional, Universidad de Sevilla, 41012 Sevilla, Spain}


\date{\today}


\begin{abstract}
We show that quantum computers can be used for 
producing large $n$-partite nonlocality, thereby
providing a method to benchmark them.
The main challenges to overcome are as follows: (i)~The interaction topology might 
not allow arbitrary two-qubit gates. (ii)~Noise limits the Bell violation. (iii)~The 
number of combinations of local measurements grows exponentially with $n$. 
To overcome (i), we point out that graph states that are compatible with the 
two-qubit connectivity of the computer can be efficiently prepared. 
To mitigate (ii), we note that for specific graph 
states, there are $n$-partite Bell inequalities whose resistance to 
white noise increases exponentially with $n$. 
To address 
(iii) for any $n$ and any connectivity, we introduce an estimator that relies on random sampling. 
As a result, we propose a method for producing $n$-partite Bell nonlocality with unprecedented large $n$. 
This allows one, in return, to benchmark nonclassical correlations regardless of the number of qubits or the connectivity.
We test our approach by using a simulation for a noisy IBM quantum computer, which
predicts $n$-partite Bell nonlocality for at least $n=24$ qubits.
\end{abstract}


\maketitle


\section{Introduction}


The term ``$n$-partite nonlocality'' refers to correlations between $n$ parties 
that cannot be explained by any local realistic model \cite{Brunner2014, Bancal2013}.
It can be detected by a violation of a multipartite Bell inequality, which shows that at least
one of the assumptions of a local realistic model is false \cite{Bell1964, Vieira2024}.
The experimental test of $n$-partite nonlocality is, however, challenging.
The first and main reason is that it is very difficult to have physical systems with $n$ parts 
that can be prepared in a genuinely $n$-partite entangled quantum state \cite{Mermin1990} and 
on which specific local measurements can be performed on each of the $n$ individual parts. 
From this perspective, quantum computers offer a unique chance to realistically go to a large $n$ 
and test quantum theory. 
Quantum computers have dozens, hundreds, or even thousands of qubits which can, in principle, 
be prepared in arbitrary quantum states and then measured individually.

We propose a method to produce and certify $n$-qubit Bell nonlocality with 
unprecedented large $n$.
Quantum mechanics predicts nonlocality and that the violation increases exponentially with $n$.
The main motivation is thus to experimentally test this prediction for large $n$, 
i.e., in the ``macroscopic'' limit.
Specifically, in our case, the aim is observing nonlocality produced by
``a superposition of macroscopically distinct states'' \cite{Mermin1990}.
In this respect, this work is in the line of recent results showing that quantum computers can produce 
correlations that are impossible in other platforms \cite{Chen2022} or used for many-body 
simulation of fermionic systems \cite{Jafferis2022}.

The second motivation is to use the observed Bell violation as a benchmark to compare 
different quantum computers.
Since, as far as we know, $n$-partite nonlocality is a phenomenon specific to quantum 
theory, one can think of using it to quantify the ``quantumness'' of the device
that has produced it. 
This can be achieved by the fraction $D_{n}=Q_{n}/C_{n}$ of
the maximal quantum value $Q_n$ and the classical bound $C_{n}$
\cite{Cabello2008, Werner2001, Guehne2005, Mermin1990}.
For our use-case, we stress that $D_{n}$ is related to the resistance of the violation to 
depolarization noise.
Moreover, $D_{n}$ is associated to the detection efficiency that is required to 
classically simulate the quantum nonlocality \cite{Cabello2008}.

A variety of benchmarks have been proposed to test the quality of quantum computers, 
i.e., the quantum volume or the cross-entropy benchmark \cite{Eisert2020, Frank2024, Arute2019}.
Still, no universally accepted standard has been established. Current approaches do 
not fulfill the ideal requirements to be independent of the noise model and the hardware:
to not be tied to one algorithm but still being predictive and scalable in practice \cite{Frank2024}. 
The phenomenon of $n$-partite nonlocality is promising in this 
regard, as a Bell violation has an interpretation independent of the hardware and 
the specific noise. Observing a violation of a Bell inequality requires specific 
quantum states and measurements. Consequently, Bell inequalities can be used to
certify both measurements and states \cite{Supic2020}, which makes them attractive for 
benchmarking. 
In contrast, entanglement witnesses certify a quantum state given
some well-characterized measurements, which requires additional assumptions. 
In addition, a Bell test can be carried out, in principle, for all pure entangled states 
\cite{Popescu1992}. In this sense, using Bell inequalities for benchmarking does not 
rely on a specific algorithm to prepare a certain quantum state. The fraction $D_{n}$ 
grows with system size and we show that this facilitates the scaling of the Bell test 
to many qubits. Finally, an observed Bell violation can be used to lower bound the 
fidelity, which allows one to predict the quality of other computations.

Experimentally, violations of $n$-partite Bell inequalities have been observed in a 
variety of physical systems that are promising for quantum computing
\cite{Lanyon2014,Zhang2015,Pelisson2016,Alsina2016,
Swain2019,Gonzalez2020,Baumer2021,Amouzou2022,Yang2022,DeBoutray2021,Singh2022}. 
So far, however, mostly systems with a relatively small number of parties, $n$, have been considered. 
In an ion trap, violations of up to $n=14$ have been verified \cite{Lanyon2014}, whereas
nonlocality has also been shown for $n=6$ with photons \cite{Zhang2015}.
$n=3$ nonlocality has been verified on a nuclear magnetic resonance (NMR) quantum simulator \cite{Singh2022}.
Finally, nonlocality has also been investigated in atomic ensembles \cite{Engelsen2017} and in optical
lattices \cite{Pelisson2016}.
The largest number of parties has been achieved with superconducting qubits with up to $n=57$ 
\cite{Yang2022}.
To limit the experimental resources, however, this reference considered Bell inequalities 
that do not show an exponential violation in $n$.

The question remains why exponentially increasing nonlocality has not been 
verified with quantum computers before. 
There are, fundamentally, three challenges:

(i)~Typically, quantum computers can apply two-qubit gates only on some specific pairs of qubits. 
Hereafter, we will refer to the map that specifies these pairs as the connectivity of the 
quantum computer. 
A consequence of this limitation is that not all connectivities allow us to equally well 
prepare an $n$-qubit Greenberger-Horne-Zeilinger (GHZ) \cite{Greenberger1989} state, 
which was the default option in \cite{Alsina2016,Lanyon2014}.

(ii)~Quantum computers with larger $n$ are typically more affected by noise and decoherence. 
Therefore, the larger $n$ is, the harder it becomes to prepare the target state and to observe 
a violation of a Bell inequality. 

(iii)~The number of different combinations of local measurements (contexts) needed to test a 
Bell inequality increases with $n$.
For example, in the case that there are two measurements per qubit, the number of contexts 
scales exponentially in $n$ and, typically, so does the number of terms needed to test the Bell inequality.
Therefore, measuring all contexts becomes infeasible for large $n$.

In this work, we show how to overcome or alleviate each of these three challenges. First, 
we will show that there is a natural solution to problem (i).
For a given connectivity, we can focus on
the graph states that are compatible with the connectivity graph. 
Graph states are a specific set of pure entangled states \cite{Hein2006} and can 
be prepared by applying controlled-Z (CZ) gates on adjacent qubits in the graph.
Thus, if we assume that CZ gates can be performed between connected qubits, graph states
that correspond to subgraphs of the two-qubit connectivity can be readily prepared.

For $n$-qubit graph states, there are some general methods to obtain $n$-partite Bell inequalities 
\cite{Guehne2005, Guehne2008, Cabello2008,Scarani2005,Geza2006}.
However, it is, in general, a hard task to find the optimal one (in the sense of resistance to noise 
of the violation); the optimal $n$-partite Bell inequalities in terms of the stabilizers have been 
identified for some graph states \cite{Cabello2008}. 
In particular, the optimal $n$-partite Bell inequalities associated to the GHZ and 
linear cluster (LC) state are known \cite{Cabello2008, Guehne2008}.
These states correspond to the extreme cases of connectivity: 
On the one side, the GHZ state is easy to prepare when all qubits can be coupled to each other.
On the other side, we consider the LC state that can be conveniently prepared on
a quantum computer with minimal connectivity, i.e., the connectivity graph is a line.
Cluster states have, in addition, the advantage to be more resistant to decoherence \cite{Duer2004}.
Quantum theory predicts that for the GHZ and LC states, the ratio $D_{n}$ can be made 
arbitrarily large by increasing the number of qubits 
\cite{Mermin1990,Ardehali1992,Guehne2008}.
This means that in theory, the resistance to noise of multipartite nonlocality {\em grows 
exponentially} with the number of particles.
This helps to overcome (ii).

In addition, the main aim of this work is to introduce a general method to address problem (iii).
For this purpose, we discuss how the expectation value of a Bell operator can be estimated
from the measurements of only a few terms.
The terms are chosen at random and thus the method falls into line with previous randomized
measurement approaches, e.g., direct fidelity estimation \cite{Flammia2011, Cao2023}
or few-copy entanglement verification \cite{Saggio2018}.
As this method is not restricted to a specific Bell inequality, it can be applied to the 
Bell operator that is most appropriate for the experimental set-up taking into account
the feasible interactions.

We note, however, that quantum computers usually are not suited for a loophole-free Bell test.
For example, ions are typically in the same trap and superconducting qubits on the same chip.
It is thus not possible to rule out communication. 
However, in principle, the interactions can be tuned to minimize the crosstalk between the qubits. 
Hereafter, we will refer to this assumption as the no-crosstalk assumption and 
we will make it on the belief that quantum computers are the only way to investigate 
$n$-partite nonlocality with large $n$. 

To explain the proposed method, we will introduce graph states that can be readily prepared
on a given connectivity. 
Accordingly, we discuss examples of Bell inequalities associated to graph states that show
an exponential scaling of the fraction $D_{n}$ in Sec.~\ref{Sec_Bell_ineqs}.
Afterwards, we will explain in Sec.~\ref{Sec_methods} the method to measure the Bell inequalities.
As we propose to evaluate the Bell inequalities by random sampling in Sec.~\ref{Sec_random_sampling},
we first formulate the Bell test as a hypothesis test in Sec.~\ref{Sec_hypothesis_test}.
After we discuss the sample complexity in Sec.~\ref{Sec_state_samples}, we will
apply the method exemplarily to the Bell inequalities of the GHZ and LC states
in Sec.~\ref{Sec_Analysis_LC_GHZ}.
These Bell inequalities cover the extreme cases of connectivities in quantum computers.
In doing so, we consider actual architectures of current quantum computers.
Finally, we include a simulation for the IBM Eagle quantum processor in Sec.~\ref{Sec_simulation_IBM}.


\section{Bell inequalities with an exponential nonlocality}\label{Sec_Bell_ineqs}


We start this section by discussing graph states, which is an important class of quantum states
in quantum information theory \cite{Hein2006, Guehne2009}.
For our aim, they are specifically useful as the graph states that are compatible with the
connectivity graph of an $n$-qubit quantum computer can be readily prepared.
Suppose $G$ is a graph of $n$ vertices. 
For each vertex $i$ in $G$, we define a stabilizing operator $g_{i}$ by
\begin{equation}\label{Eq_stabilizers}
    g_{i}\coloneqq X_{i}\bigotimes_{j\in\mathcal{N}(i)}Z_{j},
\end{equation}
where $\mathcal{N}(i)$ is the neighborhood of vertex $i$, i.e., all vertices that are connected
to $i$.
In the above definition, $X_{i}$, $Y_{i}$ and $Z_{i}$ denote the Pauli matrices acting on qubit $i$.
The graph state $\ket{G}$ that is associated with $G$ is the common eigenstate of all stabilizing operators
with eigenvalue $+1$, i.e.,
\begin{equation}
    g_{i}\ket{G}=\ket{G},\qquad \text{for }i=1,\ldots,n.
\end{equation}
The graph state $\ket{G}$ has the explicit expression \cite{Hein2006}
\begin{equation}\label{Eq_graph_state_explicit}
    \ket{G}=\prod_{(i,j)\in E}\CZ{i}{j}\ket{+}^{\otimes n},
\end{equation}
where $E$ is the set of edges of the graph and $\CZ{i}{j}$ is the controlled-Z gate acting
on qubits $i$ and $j$.
This motivates the assumption of the specific universal gate set.
Quantum computers that natively implement the $\text{CZ}$ gate can readily prepare the
graph state in Eq.~\eqref{Eq_graph_state_explicit} in case the graph $G$ is a subgraph
of its connectivity graph.
\begin{figure}[t!]
    \includegraphics[width=0.9\linewidth]{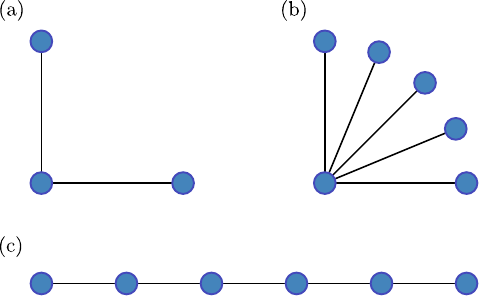}
    \caption{Three exemplary graph states. (a) For three qubits, the GHZ state and the LC state
        coincide. The star graph in (b) corresponds to the GHZ state of six qubits, whereas
        the line graph in (c) is associated to the LC state of six qubits.}
    \label{Fig_graph_states}
\end{figure}

Different graphs may lead to graph states that are connected by a local unitary (LU) 
transformation, which does not change the entanglement properties.
A special class of LU transformations is local Clifford operations \cite{Hein2006}, which
can be described by a graphical rule called local complementation.
A local complementation in vertex $i$ transforms a graph $G=(V,E)$ into 
a graph $G'=(V,E')$ by inverting the neighborhood $\mathcal{N}(i)$ of vertex $i$.
For two vertices $j,k\in\mathcal{N}(i)$, if $(j,k)\in E$ then $(j,k)\notin E'$, and 
vice versa.
The set of vertices $V$ is unchanged.

In addition, graph states also have another benefit.
For all graph states, there is a known associated Bell inequality that is maximally
violated by the graph state \cite{Guehne2005}.
Specifically, for tree graphs, i.e., connected graphs that do not contain a cycle,
the violation increases exponentially in the number of qubits, $n$ \cite{Guehne2005},
i.e., the ratio $\D{n}{}$ increases exponentially in $n$.
This facilitates the observation of a Bell violation.
Suppose, in an experiment, that the noisy graph state $\rho=\alpha\ketbra{G}+(1-\alpha)\Id/2^{n}$ is
prepared.
The expectation value of the Bell operator is $\exs{\B{n}{}}_{\rho}=\alpha\Q{n}{}$.
A violation is thus observed for
\begin{equation}\label{Eq_violation_p_lbound}
    \alpha>\frac{\C{n}{}}{\Q{n}{}}=D_{n}^{-1},
\end{equation}
which decreases exponentially with the number of qubits $n$.

We conclude this section by discussing two specific Bell inequalities for the graph
states in Fig.~\ref{Fig_graph_states}.
For the $n$-qubit GHZ state, Mermin's inequality constitutes a known Bell inequality 
\cite{Mermin1990, Guehne2008}.
Mermin's inequality is up to local rotations equivalent to the Bell operator
\begin{equation}\label{Eq_Bell_operator_GHZ}
    \B{n}{GHZ}=g_{1}\prod_{i=2}^{n}(\Id+g_{i}),
\end{equation}
where $g_{i}$ are the stabilizing operators defined in Eq.~\eqref{Eq_stabilizers}.
The quantum bound $\Q{n}{GHZ}=2^{n-1}$ is achieved for the $n$-qubit GHZ state.
The classical bound and thus the maximal ratio $\D{}{}$ for Mermin's inequality are
\begin{equation}
\begin{split}
    \C{n}{GHZ}=\begin{cases}
    2^{\frac{n-1}{2}},\\
    2^{\frac{n}{2}},
    \end{cases},~
    \D{n}{GHZ}=\begin{cases}
    2^{\frac{n-1}{2}}, & n~\text{odd},\\
    2^{\frac{n-2}{2}}, & n~\text{even}.
    \end{cases}
\end{split}
\end{equation}

For the LC state, in turn, we consider the Bell inequalities in
Ref.~\cite{Guehne2008} that are defined in case the number of qubits is a multiple of three.
The Bell operator is 
\begin{equation}\label{Eq_Bell_operator_LC}
    \B{n}{LC}=\prod_{\substack{i=1}}^{n/3}(\Id+g_{3i-2})
        g_{3i-1}(\Id+g_{3i}),
\end{equation}
which takes the maximal value for the $n$-qubit LC state.
As all terms in Eq.~\eqref{Eq_Bell_operator_LC} are stabilizing operators of
the LC state, we have $\Q{n}{LC}=4^{n/3}$.
The classical bound, in turn, is
\begin{equation}
    \C{n}{LC}=2^{n/3}\qquad\text{and thus}\qquad\D{n}{LC}=2^{n/3}.
\end{equation}


\section{Methods}\label{Sec_methods}


The Bell operators in Sec.~\ref{Sec_Bell_ineqs} rely on a number of contexts 
that scales exponentially in the number of qubits $n$, i.e., there
are, in principle, an exponential number of terms to measure.
The number of observables thus quickly becomes infeasible.
In this section, we introduce a general method to evaluate a Bell inequality
by sampling random terms.
The terms of the Bell inequality are picked according to a uniform probability distribution.
This approach stands in line with previous schemes that use randomization to reduce
the number of measurements, e.g., direct fidelity estimation \cite{Flammia2011, Cao2023}
or few-copy entanglement detection \cite{Saggio2018}.
Finally, there exist different methods to assess the statistical strength
of Bell tests \cite{Peres2000, VanDam2005, Zhang2010, Elkouss2016}.
We gauge the significance of a violation with the help of the $p$ value.
For this purpose, we start by formulating a Bell test as a hypothesis test.


\subsection{Hypothesis test}\label{Sec_hypothesis_test}


The task is to evaluate a general Bell inequality with classical bound $C$, i.e.,
to check the inequality
\begin{equation}
    \exs{\B{}{}}\leq C.
\end{equation}
The expectation value on the right-hand side, however, cannot be inferred exactly in 
an experiment.
Rather, the expectation value has to be estimated from multiple experimental repetitions.
For this purpose, it is useful to consider an unbiased estimator $\exs{\hat{\B{}{}}}$, which
we denote by a hat.
An estimator $\exs{\hat{\B{}{}}}$ is a function of the experimental data and it
is unbiased in case it reproduces the actual value in expectation, i.e., 
$\Es{\exs{\hat{\B{}{}}}}=\exs{\B{}{}}$.

Due to the finite statistics, however, the estimate $\exs{\hat{\B{}{}}}$ 
fluctuates. 
There is thus a nonzero probability to observe a violation of the Bell inequality,
although the actual state does not violate it.
To quantify the probability that an observed violation is only due to statistical
fluctuations, we formulate the Bell test as a hypothesis test.
The hypotheses are as follows:
\begin{itemize}
    \item[(a)] Null hypothesis \textbf{$\mathbf{H_{0}}$:} The measurement outcomes can be described by a local
        hidden variable (LHV) model.
    \item[(b)] Alternative hypothesis \textbf{$\mathbf{H_{1}}$:} The Bell inequality is violated.
\end{itemize}

To gauge whether the observed data is in contradiction with the null hypothesis $H_{0}$,
we look at the $p$ value.
The $p$ value is defined as the probability to observe an estimate at least as large as some value 
$\theta$ in case $H_{0}$ is true, i.e.,   
\begin{equation}
    p=\Prob{\exs{\hat{\B{}{}}}>\theta~\big|~H_{0}}.
\end{equation}
But the $p$ value is hard to calculate since the probability depends on the 
probability distribution of the estimator $\exs{\hat{\B{}{}}}$, which is unknown.
We can, however, upper bound the $p$ value.
A local theory can, at most, reach a value of $\exs{\B{}{}}=\Es{\exs{\hat{\B{}{}}}}=C$.
Thus, in case $H_{0}$ is true, the estimator $\exs{\hat{\B{}{}}}$ has to exceed its mean by at least
$t=\theta-C$ if a violation $\exs{\hat{\B{}{}}}=\theta>C$ is observed.
This is sketched in Fig.~\ref{Fig_p_value_ubound}.
We consequently obtain the upper bound
\begin{equation}\label{Eq_p_value_ubound}
    p\leq\Probs{\exs{\hat{\B{}{}}}-\exs{\B{}{}}\geq t}.
\end{equation}
In the following, we use Hoeffding's inequality \cite{Hoeffding1963} to upper bound 
the right-hand side of Eq.~\eqref{Eq_p_value_ubound}.
Hoeffding's inequality is a large deviation bound that typically involves the 
number of repetitions.
This allows us to connect the number of repetitions to the $p$ value.
Finally, we say that an observed result has a confidence level of $\gamma=1-p$.


\begin{figure}[t!]
    \includegraphics[width=0.9\linewidth]{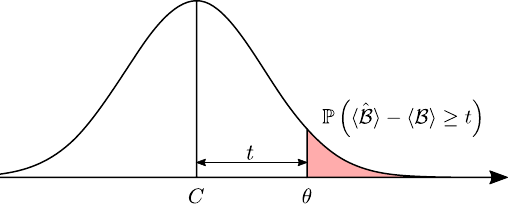}
    \caption{Upper bound of the $p$ value. To observe a value of $\theta>C$ even
        though the state obeys the classical bound, the estimator $\exs{\hat{\mathcal{B}}}$
        has to exceed its mean by at least $t$.
    \label{Fig_p_value_ubound}}
\end{figure}


\subsection{Random sampling}\label{Sec_random_sampling}


So far, we have noted that there are known multipartite Bell inequalities
that adapt to the architecture of the quantum computer and show a growing resistance 
to white noise.
This allows one to overcome problems (i) and (ii).
We have also seen, however, that the number of terms grows exponentially, 
which makes it experimentally infeasible to measure all terms.
In this section, we describe a scheme to overcome this problem and which allows one to
evaluate the Bell inequalities.

A general Bell operator $\mathcal{B}$ can be written as a sum of observables,
\begin{equation}
    \B{}{}=\sum_{j=1}^{M}B_{j}.
\end{equation}
We note that for the Bell inequalities in Sec.~\ref{Sec_Bell_ineqs}, the number of terms equals
the quantum bound, i.e., $M=Q_{n}$.
We propose to estimate the expectation value of the Bell operator
by measuring the expectation value of $L$ randomly chosen terms.
To analyze how many terms $L$ have to be sampled, we first assume that the
expectation values can be inferred directly, i.e., we consider the limit
of infinite measurements.
With this simplification, the estimator reads
\begin{equation}\label{Eq_estimator_inf}
    \exs{\hat{\mathcal{B}}}_{\infty}=\frac{M}{L}\sum_{l=1}^{L}\exs{B_{J_{l}}}.
\end{equation}
We note that in the above expression, $J_{1},\ldots ,J_{L}$ are independent random variables
with possible outcomes in the range $\{1,\ldots,M\}$.
We assume that all outcomes are equally likely, i.e., 
$\Probs{J_{l}=j}=\frac{1}{M}$ for all $j\in\{1,\ldots ,M\}$.
In Appendix~\ref{App_estimator_inf}, we show that the estimator is unbiased, i.e.,
$\Es{\exs{\hat{\mathcal{B}}}_{\infty}}=\exs{\mathcal{B}}$.

To assess the significance of an observed violation, we consider the $p$ value.
We upper bound the $p$ value with the help of Eq.~\eqref{Eq_p_value_ubound}.
The probability on the right-hand side of Eq.~\eqref{Eq_p_value_ubound}
can be bounded with the help of concentration inequalities.
Here, we consider Hoeffding's inequality, which yields
\begin{equation}\label{Eq_p_value_est_inf}
    p\leq\Probs{\exs{\hat{\mathcal{B}}}_{\infty}-\exs{\mathcal{B}}\geq t}
    \leq\exp\left(-\frac{t^{2}}{2M^{2}}L\right).
\end{equation}

The details of Hoeffding's inequality are presented in Appendix~\ref{App_Hoeffding_inf}.
This result can be used to derive a lower bound on the necessary $L$.
To reach a confidence level of $\gamma=1-p$, 
\begin{equation}\label{Eq_necessary_L}
    L\geq\left\lceil-\frac{2M^{2}}{t^{2}}\ln(1-\gamma)\right\rceil
\end{equation}
random terms of the Bell operator have to be sampled.


\subsection{Number of measurement repetitions}\label{Sec_state_samples}


We now take into account that the expectation values cannot be inferred directly.
Rather, the measurement of each chosen term $B_{j}$ has to be repeated multiple times.
In this section, we are interested in how many repetitions are necessary.
For this purpose, we adjust the estimator in Eq.~\eqref{Eq_estimator_inf} to include 
the measurement repetitions.
We assume that every term is measured $K$ times. 
This yields the estimator
\begin{equation}\label{Eq_estimator_rep}
    \exs{\hat{\mathcal{B}}}=\frac{M}{KL}\sum_{l=1}^{L}\sum_{k=1}^{K}b_{J_{l}}^{(k)}.
\end{equation}
In the above estimator, $b_{j}^{(k)}$ denotes the measurement outcome of the term $B_{j}$
in the $k$th repetition.
Thus, we have $b_{j}^{(k)}\in\{\pm 1\}$.
As in Eq.~\eqref{Eq_estimator_inf}, $J_{1},\ldots ,J_{L}$ denote independent random 
variables that when uniformly distributed, take values in the range $\{1,\ldots,M\}$.
We also show in Appendix~\ref{App_estimator_rep} that the estimator in Eq.~\eqref{Eq_estimator_rep}
is unbiased, i.e., $\Es{\exs{\hat{\mathcal{B}}}}=\exs{\mathcal{B}}$.
Finally, we can again use Eq.~\eqref{Eq_p_value_ubound} and Hoeffding's inequality 
to bound the $p$ value,
\begin{equation}\label{Eq_p_value_Ubound_rep}
    p\leq\Probs{\exs{\hat{\mathcal{B}}}-\exs{\mathcal{B}}\geq t}
    \leq\exp(-\frac{t^{2}}{2M^{2}}KL).
\end{equation}
However, we have already derived a lower bound for $L$.
We thus obtain a lower bound on $K$ from Eq.~\eqref{Eq_p_value_Ubound_rep}:
\begin{equation}
    K\geq\frac{1}{L}\left\lceil-\frac{2M^{2}}{t^{2}}\ln(1-\gamma)\right\rceil.
\end{equation}
Since $L\geq\lceil-\frac{2M^{2}}{t^{2}}\ln(1-\gamma)\rceil$, we observe
that $K\geq 1$.
As a result, we can conclude that it is sufficient to measure each term
only once.
\begin{figure}[t]
    \includegraphics[width=\linewidth]{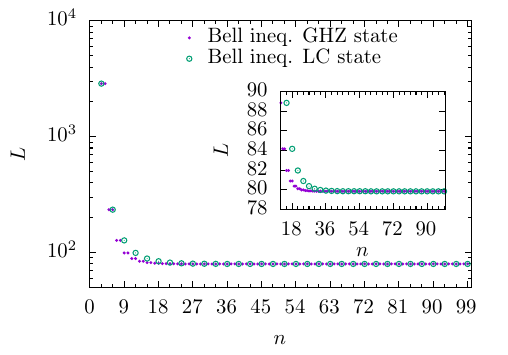}
    \caption{Necessary number of random observables, $L$, which are measured $K=1$ times each, 
    such that an observed violation of at least $\exs{\hat{\B{}{}}}=\alpha Q_{n}$ for 
    $\alpha=0.6$ has a confidence of $\gamma=5\sigma$.
    $L$ is plotted as a function of the number of qubits, $n$.
    \label{Fig_Example_Necessary_L_GHZ_LC_alpha0_6}}
\end{figure}

As an example, we consider the case that a violation of at least 
$\exs{\hat{\B{}{}}}=\alpha Q_{n}$ for $\alpha=0.6$ is observed.
This value is motivated by Ref.~\cite{Cao2023}, where they prepared a $51$-qubit LC state
with fidelity of approximately $0.6$.
The necessary number of measurement settings, $L$, is shown in Fig.\,\ref{Fig_Example_Necessary_L_GHZ_LC_alpha0_6}.
We note that for small $n$, $L$ exceeds the number of contexts of the Bell operator.
This, however, is not a contradiction as we do not 
exclude that a term is sampled multiple times.
As an example, we can make the following observation:
\begin{observation}
    In case a violation $\exs{\hat{\B{}{}}}=0.6\times Q_{n}$ of the Bell inequality 
    associated to the GHZ state or the LC state for $n=51$ qubits has been
    observed by sampling $L=80$ random terms, the result has a confidence level
    of $\gamma=5\sigma$.
\end{observation}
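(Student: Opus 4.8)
The plan is to treat the Observation as a direct numerical instantiation of the sample-complexity bound of Eq.~\eqref{Eq_necessary_L}, specialized to $n=51$ and $\alpha=0.6$, rather than as a new theorem. Concretely, I would verify that the choice $L=80$ (with $K=1$, which Sec.~\ref{Sec_state_samples} already shows is sufficient) makes the Hoeffding upper bound on the $p$ value in Eq.~\eqref{Eq_p_value_Ubound_rep} no larger than the $p$ value associated with a $5\sigma$ confidence level. Since the confidence is defined as $\gamma=1-p$, it suffices to show $p\leq p_{5\sigma}$, where I take the two-sided Gaussian tail $p_{5\sigma}=2[1-\Phi(5)]\approx 5.7\times 10^{-7}$.

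First I would fix the parameters entering Eq.~\eqref{Eq_p_value_Ubound_rep}. For both Bell operators the number of terms equals the quantum bound, $M=\Q{n}{}$, and the observed value is $\theta=\alpha\Q{n}{}$, so the deviation above the classical bound is $t=\theta-\C{n}{}=(\alpha-D_n^{-1})\Q{n}{}$, using $\C{n}{}=\Q{n}{}/\D{n}{}$. The crucial simplification is that only the ratio $t/M=\alpha-D_n^{-1}$ enters the exponent, and for $n=51$ the noise-resistance ratio is exponentially large: $\D{51}{GHZ}=2^{25}$ (from $\D{n}{GHZ}=2^{(n-1)/2}$ for odd $n$), and $\D{51}{LC}=2^{17}$ (from $\D{n}{LC}=2^{n/3}$, legitimate since $51=3\cdot 17$). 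In both cases $D_n^{-1}\ll\alpha$, so $t/M\approx\alpha=0.6$ to within relative error below $10^{-4}$.

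Then I would substitute into Eq.~\eqref{Eq_p_value_Ubound_rep} with $K=1$ and $L=80$:
\begin{equation}
    p\leq\exp\!\left(-\frac{t^{2}}{2M^{2}}L\right)
    =\exp\!\left(-\frac{(\alpha-D_n^{-1})^{2}}{2}\,80\right)
    \approx\exp(-14.4)\approx 5.6\times 10^{-7}.
\end{equation}
Since $5.6\times 10^{-7}\leq p_{5\sigma}\approx 5.7\times 10^{-7}$, the observed violation has confidence $\gamma=1-p\geq 5\sigma$, and because $t/M$ is essentially $\alpha$ for both states the same $L=80$ works for the GHZ and LC operators simultaneously, which is exactly what the Observation asserts. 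Equivalently, I could invoke Eq.~\eqref{Eq_necessary_L} in inverse form: $L\geq\lceil-2\alpha^{-2}\ln p_{5\sigma}\rceil=\lceil 5.56\cdot 14.37\rceil=\lceil 79.8\rceil=80$, so $80$ is precisely the minimal integer count.

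The computation itself is elementary, so the only real subtlety, and the step I would state most carefully, is fixing the convention that turns ``$5\sigma$'' into a numerical $p$ value; the two-sided choice $p_{5\sigma}\approx 5.7\times 10^{-7}$ is what makes $L=80$ tight, whereas the one-sided value $\approx 2.9\times 10^{-7}$ would instead demand $L\approx 84$. A secondary point worth making explicit is that the bound is genuinely $n$-independent in this regime: because $\D{n}{}$ grows exponentially, the correction $D_n^{-1}$ is negligible and $L$ is controlled solely by $\alpha$ and the target confidence, which is why an identical count governs two structurally different Bell operators at $n=51$.
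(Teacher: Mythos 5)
Your proposal is correct and follows exactly the route the paper intends: the Observation is just Eq.~\eqref{Eq_necessary_L} evaluated at $M=Q_{51}$, $t=(\alpha-D_{51}^{-1})Q_{51}$ with $\alpha=0.6$ and the two-sided $5\sigma$ tail $p\approx5.7\times10^{-7}$, for which $D_{51}^{-1}$ is negligible for both the GHZ and LC operators and the formula yields $\lceil 79.8\rceil=80$. Your remark that the paper's $L=80$ pins down the two-sided convention (one-sided would give $84$) is a useful clarification, but the argument itself is the same as the paper's.
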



\section{Analysis of the Bell inequalities for the GHZ and LC state}\label{Sec_Analysis_LC_GHZ}


In this section, we are going to apply the method to the Bell inequalities for graph states.
In particular, we will look at the Bell inequalities for the GHZ and the LC states.
As described in the introduction, these Bell inequalities are promising to detect a large 
violation. 
The discussion in the previous section has shown that this is necessary to verify the 
violation from a few measurement settings with high significance.
We start with the question of how the LC and GHZ states can be prepared on a quantum computer with
a given two-qubit connectivity.
Afterwards, we consider the effect of noise on the preparation with a simple depolarization
noise model, which gives insight into the sample complexity of the method.


\subsection{Connectivities of current quantum computers}


We start by having a look at the connectivity graphs of different
quantum computers and the Bell inequalities that can be evaluated on the 
different architectures.
In Fig.~\ref{Fig_architectures}, we show the connectivity graphs of a few current
quantum computers.
The first connectivity in Fig.~\ref{Fig_architectures}(a) is the star graph of
five qubits, i.e., one central qubit connected to four other qubits.
This layout is used, e.g., in the Starmon-5 quantum processor \cite{QuTech2020}.
Figure~\ref{Fig_architectures}(b) shows the connectivity graph that is used
by IBM's Falcon processor \cite{IBMQ}.
The ion trap quantum computer in Ref.~\cite{Friis2018} has 20-qubits that can all be 
coupled.
The corresponding connectivity graph is shown in Fig.~\ref{Fig_architectures}(c).
We also include the connectivity graphs of Google's Sycamore processor in 
Fig.~\ref{Fig_architectures}(d) \cite{Arute2019} that has $53$ qubits
and IBM's Eagle processor \cite{IBMQ} that has $127$ qubits in Fig.~\ref{Fig_architectures}(e).
\begin{figure}[!b]
    \includegraphics[width=\linewidth]{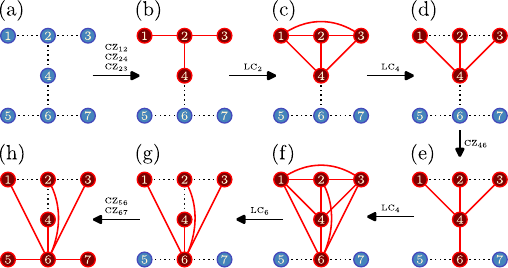}
    \caption{Preparation scheme for the GHZ state on the $7$-qubit
        interaction topology that is, for example, used by IBM's Falcon processor.
        The dotted lines indicate the physical CZ gates, whereas the graph state
        is drawn in red.
    \label{Fig_graph_state_prep}}
\end{figure}

\begin{figure*}[!t]
    \includegraphics[width=\linewidth]{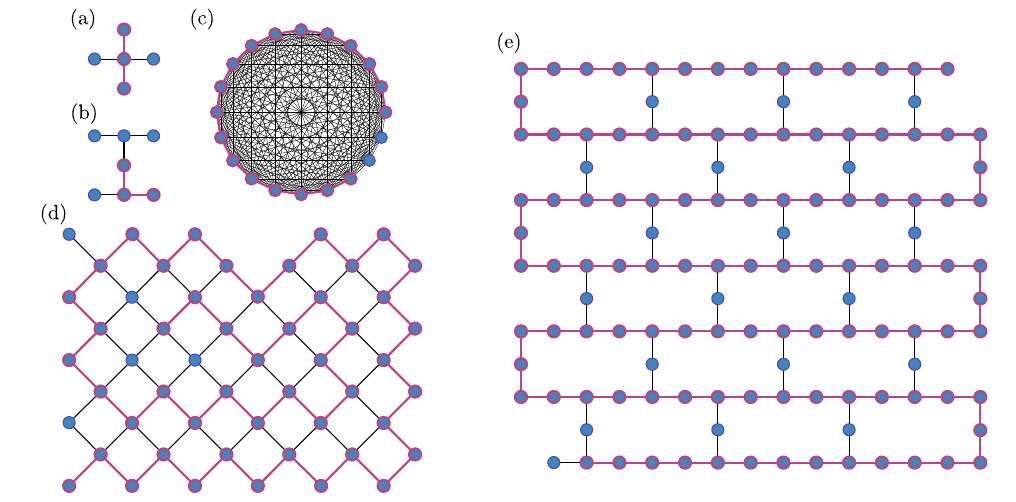}
    \caption{Connectivities of different quantum computers. The connectivity graph in
    (a) is used, for example, by the the Starmon-5 quantum processor \cite{QuTech2020},
    whereas IBM's Falcon processor \cite{IBMQ} is based on layout (b).
    (c) The connectivity of the ion trap quantum computer in Ref.~\cite{Friis2018}.
    (d) The connectivity graph of Google's Sycamore processor
    \cite{Arute2019}; (e) IBM's Eagle processor \texttt{ibm\_brisbane} \cite{IBMQ}.  
    \label{Fig_architectures}}
\end{figure*}

As the optimal Bell inequality that is associated to the connectivity graph is,
in general, hard to determine, we will focus on the Bell inequalities
for the GHZ and the LC states. 
In the following, we are interested in the largest
GHZ and LC states that can be prepared on the different layouts in Fig.~\ref{Fig_architectures}.
Ideally, the preparation only requires the set of basis gates that are directly implemented by the quantum computer.
\begin{observation}
    On a quantum computer of $n$ qubits, it is always possible to prepare a GHZ state 
    of all $n$ qubits with a circuit of $\mathcal{O}(n)$ depth.
\end{observation}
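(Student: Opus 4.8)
The plan is to reduce the task to a standard sequential preparation along a spanning tree of the device's connectivity graph, so that the construction works for any connectivity. I would first observe that a GHZ state of all $n$ qubits can only be prepared if the connectivity graph is connected, since qubits in different components can never become entangled; this is a property of any functioning quantum computer, so I assume it. A connected graph on $n$ vertices always contains a spanning tree $T$ with exactly $n-1$ edges, and I would fix such a $T$ together with an arbitrary root vertex $r$.

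I would then exhibit the circuit explicitly. Starting from $\ket{0}^{\otimes n}$, apply a Hadamard to the root to obtain $\tfrac{1}{\sqrt{2}}\big(\ket{0}_{r}+\ket{1}_{r}\big)\otimes\ket{0}^{\otimes(n-1)}$. Next, process the edges of $T$ in any order that visits each parent before its children (for instance, breadth-first from $r$), applying for each edge a $\mathrm{CNOT}$ whose control is the parent and whose target is the child. Correctness follows by induction on the traversal: the set of already-visited vertices always carries a GHZ state tensored with $\ket{0}$ on the remaining qubits, and processing an edge copies the shared bit from the parent (already visited) onto the child in the computational basis, enlarging the GHZ block by one qubit. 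After all $n-1$ edges the state is $\tfrac{1}{\sqrt{2}}\big(\ket{0}^{\otimes n}+\ket{1}^{\otimes n}\big)$, the GHZ state. Since the paper assumes a native CZ gate, each CNOT is realized as $\mathrm{CNOT}_{c\to t}=H_{t}\,\CZ{c}{t}\,H_{t}$; this keeps every two-qubit gate on an edge of $T$, which is a subgraph of the connectivity graph and hence physically allowed, while adding only single-qubit layers.

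For the depth, the circuit contains exactly one Hadamard on the root and $n-1$ CNOTs, i.e., $n-1$ two-qubit gates in total. Even in the pessimistic case where the hardware forbids all parallelism, applying the gates one after another gives at most $n-1$ two-qubit layers interleaved with a constant number of single-qubit layers, so the depth is $\mathcal{O}(n)$; parallelizing gates acting on disjoint qubits can only improve this, but $\mathcal{O}(n)$ is all that is claimed. I do not expect a genuine obstacle here, since the depth bound is essentially a gate count: the only content that requires care is that the construction must not depend on the particular connectivity, which the spanning-tree reduction handles, and that the CNOTs be ordered so that each control already holds the entangled bit before it is copied. I would emphasize that one cannot prove this by treating the GHZ state as the graph state of some realizable spanning tree, because the graph states of a path and a star are not LU-equivalent for $n\geq 4$; the CNOT fan-out above sidesteps this precisely because it is an entangling copy operation rather than a graph-state preparation.
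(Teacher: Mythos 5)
Your proof is correct, but it follows a genuinely different route from the paper's. The paper never leaves the graph-state formalism: it prepares the \emph{star} graph state (which is LU-equivalent to the GHZ state) by alternating physical CZ gates with local complementations, using the fact that two local complementations relocate the star's center to any desired vertex, so that the center can always be parked next to a not-yet-coupled qubit before the next CZ is applied; counting gates gives an explicit depth of $3n+1$. You instead use the standard CNOT fan-out along a spanning tree, with each CNOT compiled as $H_t\,\CZ{c}{t}\,H_t$, and a computational-basis induction showing the GHZ block grows by one qubit per edge. Your approach is more elementary and self-contained, and your closing remark is exactly the right caution: one cannot simply prepare the graph state of the spanning tree, since a path graph state is not LU-equivalent to the GHZ state for $n\geq 4$ --- the paper avoids the same trap by a different device, namely by maintaining a star graph state throughout rather than ever producing the spanning tree's graph state. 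What the paper's construction buys is that it stays within the CZ-plus-local-Clifford toolkit used everywhere else (the same local-complementation machinery reappears in Appendix~C for the LC state) and it outputs the state directly in the star-graph form whose stabilizers $g_i$ appear in the Bell operator of Eq.~(5), so no extra basis change is needed before measurement. One cosmetic point: your claim of ``a constant number of single-qubit layers'' is slightly off, since each of the $n-1$ CNOTs carries its own pair of Hadamards (some of which cancel along a path), but this only changes the constant and not the $\mathcal{O}(n)$ depth.
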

\begin{proof}
    The connectivity graph of quantum computers is usually connected.
    Thus, by the following steps, a GHZ state of all $n$ qubits can be prepared.
    The steps are illustrated in Fig.~\ref{Fig_graph_state_prep} for 
    the architecture in Fig.~\ref{Fig_architectures}(b).
    \begin{enumerate}
        \item[(1)] Prepare a star graph with center at the qubit with the largest connectivity
            [Fig.~\ref{Fig_graph_state_prep}(b)].
        \item[(2)] By performing two local complementations, the center of the star graph can be
            shifted to any node of the graph.
            Thus, the center can be moved to a vertex with still uncoupled neighbors
            [Figs.~\ref{Fig_graph_state_prep}(c) and \ref{Fig_graph_state_prep}(d)].
        \item[(3)] By applying a CZ gate between the center node and the uncoupled neighbors
            the adjacent qubits can be added to the GHZ state
            [Fig.~\ref{Fig_graph_state_prep}(e)].
        \item[(4)] Step (2) and (3) can be repeated until all qubits are coupled.
    \end{enumerate}
    This procedure requires at least $n-1$ consecutive CZ gates.
    In the worst case, there are two local complementations needed between
    all CZ gates. Combined with the initial Hadamard gates, the circuit depth
    is $3n+1$.
\end{proof}
We point out that the GHZ state can also be prepared in logarithmic step complexity 
\cite{Cruz2019, Yu2023}, depending on the connectivity.

For the LC state, in contrast, we make the following observation.
\begin{observation}
    Assume that the connectivity graph of a quantum computer is connected. 
    Then, a linear cluster state containing all $n$ qubits can be prepared with
    a circuit depth of $\mathcal{O}(n)$.
    In practice, however, it is often beneficial to prepare the LC state that
    is associated to the longest simple path in the connectivity graph \cite{Cao2023}.
    The corresponding circuit has a constant depth of three, independent of 
    the number of qubits.
\end{observation}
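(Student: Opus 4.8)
The plan is to establish the two depth bounds separately, because the constant-depth claim concerns a path that is already embedded in the hardware, whereas the $\mathcal{O}(n)$ claim must cope with the fact that a line through all $n$ qubits need not be a subgraph of the connectivity graph. I would dispose of the constant-depth statement first, as it is the easier one. For the LC state associated to a simple path $P$ in the connectivity graph, every edge of $P$ is a physical connection, so by Eq.~\eqref{Eq_graph_state_explicit} the state is produced from $\ket{+}^{\otimes m}$ (on the $m\le n$ qubits of $P$) by one native CZ gate per edge. Starting from the computational basis, a single layer of Hadamards prepares $\ket{+}^{\otimes m}$. Since CZ gates are diagonal they pairwise commute, and the edges of a path admit a proper $2$-edge-colouring: the ``odd'' edges $(1,2),(3,4),\dots$ are mutually vertex-disjoint, as are the ``even'' edges $(2,3),(4,5),\dots$. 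Each colour class therefore acts on disjoint qubit pairs and executes in one parallel layer, giving total depth $1+2=3$, independent of $m$ and hence of $n$.

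For the first part the target graph is a line on all $n$ vertices, which in general is not a subgraph of the connectivity graph; the star is the extreme case, where no two leaves are adjacent and no Hamiltonian path exists. Here I would fix a spanning tree $T$ of the connectivity graph, which exists since the graph is connected, and build the chain sequentially in the spirit of the preceding observation. One maintains a ``growing end'' of the cluster chain, and to append the next logical site one brings it into physical adjacency with the current end by SWAP gates routed along $T$ and then applies a single CZ gate. Since a SWAP merely relabels which physical qubit carries which logical site, every intermediate state is a graph state on a relabelled path, so correctness is automatic and the entire content lies in the depth count.

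The main obstacle is precisely to show that the accumulated routing cost is $\mathcal{O}(n)$ rather than the naive $\mathcal{O}(n^{2})$ obtained by routing each of the $n-1$ edges independently over distances up to the diameter. The key fact I would invoke is that the routing number of a tree on $n$ vertices is linear, so the SWAP layers used to append successive qubits can be pipelined into $\mathcal{O}(n)$ total depth. The star makes this transparent: the centre is a universal hub, each new qubit is attached with $\mathcal{O}(1)$ SWAPs through the centre and a single CZ, and because every such operation passes through the one centre the $n-1$ attachments serialise into depth $\mathcal{O}(n)$ --- which is also why one cannot beat $\mathcal{O}(n)$ in general. Adding the initial Hadamards and the $n-1$ CZ gates leaves the depth at $\mathcal{O}(n)$. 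I expect the linear routing bound to be the crux; once it is granted, both depth counts follow directly.
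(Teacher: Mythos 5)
Your constant-depth argument for the path case is exactly the paper's: one layer of Hadamards to reach $\ket{+}^{\otimes m}$, then the two parallel CZ layers given by the $2$-edge-colouring of the path, for total depth three. No issues there.

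For the $\mathcal{O}(n)$ claim you take a genuinely different route, and it contains a gap at the step you yourself flag as the crux. The paper avoids SWAPs altogether: Lemma~\ref{Lemma_CZ_decomposition} in Appendix~\ref{App_prep_LC_state} implements a CZ between the current chain end and a distant, still-uncoupled qubit using only nearest-neighbour CZs and local complementations (single-qubit Cliffords), in depth $3(m-2)+1$, and Observation~\ref{Obs_LC_preparation} then grows the chain along a spanning tree, jumping to an uncovered branch at each branch-off; the depth is bounded by $5n$ because $\sum_i(m_i-2)\le n$, i.e., each branch is traversed a bounded number of times. Your SWAP-routing alternative is viable in principle and more standard from a compilation viewpoint, but the depth accounting is not established. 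The tree-routing-number bound you invoke concerns realizing a \emph{single} permutation by rounds of disjoint transpositions; your task is a \emph{sequence} of $n-1$ adjacency requirements, each of which must involve the unique physical qubit currently holding the chain end, so that result does not apply and the asserted pipelining is never justified. If the logical order of the chain is chosen badly (say, an interleaved order on a path-shaped spanning tree), each individual routing has length $\Theta(n)$ and the naive total is $\Theta(n^{2})$; nothing in your argument excludes this. The missing idea is to \emph{fix the ordering}: since the observation only asks for \emph{some} LC state on all $n$ qubits, take the chain order to be a depth-first traversal of the spanning tree $T$. Then $\sum_i d_T(v_i,v_{i+1})\le 2(n-1)$ because the DFS walk uses each tree edge at most twice, so $\mathcal{O}(n)$ SWAPs suffice in total and the linear depth follows with no pipelining at all. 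This telescoping over a tree traversal is precisely the accounting the paper performs, just phrased in the local-complementation language rather than the SWAP language.
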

\begin{proof}
    We show in Appendix~\ref{App_prep_LC_state} that it is always possible to 
    prepare a LC state that contains all qubits in linear circuit depth.
    The longest simple path, in contrast, can be generated by first preparing
    all qubits in the $\ket{+}$ state, i.e., by applying Hadamard gates.
    Afterwards, every second CZ gate can be performed in parallel.
    The circuit depth is thus three, independent of the length of the simple
    path. 
    The circuit is shown for the $6$-qubit LC state in 
    Fig.~\ref{Fig_qcircuits_LC_GHZ_qiskit}(a).
\end{proof}

As we only know the Bell inequality for the LC state with a number of qubits that is a multiple
of three, we search for the longest path of length divisible by three.
The longest paths that fulfill this restriction are drawn in pink in Fig.~\ref{Fig_architectures}.
The connectivity graphs in Figs.~\ref{Fig_architectures}(a) and 
\ref{Fig_architectures}(b) allow one to prepare a $3$-qubit LC state.
The largest simple path on the $20$-qubit full-connectivity graph in 
Fig.~\ref{Fig_architectures}(c) with length divisible by three has length $18$.
The quantum computer in Fig.~\ref{Fig_architectures}(c) thus allows one to check
the Bell inequality for the $18$-qubit LC state.
To find the longest path is an NP-complete problem \cite{Schrijver2002}.
We can thus not verify if the marked paths for the layouts in 
Figs.~\ref{Fig_architectures}(d) and \ref{Fig_architectures}(e)
are indeed the longest paths.
In Fig.~\ref{Fig_architectures}(d), we have identified a $48$-qubit path as the
longest simple path.
The layout in Fig.~\ref{Fig_architectures}(e), in turn, allows the preparation of
the $108$-qubit LC state.

The use of the LC state corresponding to the longest single path
implies that we are not using {\em all} the qubits of the quantum computer.
Of course, if one focuses on a single graph state that only covers {\em part} of the qubits, then one obtains only {\em partial} information about the computer. 
However, in principle, one can cover all the qubits by LC and GHZ states and, in this way, obtain more information. The quality of the benchmark depends on the effort and detail one wants to invest in it. 
But, even with a moderate effort, our method allows making simple statements such as quantum computer $A$ manages to produce a value $\exs{\hat{\mathcal{B}}}$ of the Bell parameter with $N$ qubits in a LC state, which are valuable to compare $A$ to other computers.


\subsection{Noise}


In the following section, we will discuss the effect of noise.
The goal is to roughly estimate the violation that can be realistically observed on
quantum computers and how the violation scales with the number of qubits, $n$.
Commonly, the errors of quantum computers are specified in terms of the error rates
for single-qubit gates, two-qubit gates, and readout.
The average error rates for IBM's Eagle processor \texttt{ibm\_brisbane} and Google's
Sycamore processor are shown in Table~\ref{Tab_error_rates}.
The error rate for the single-qubit gates is typically about an order smaller than
the other error rates.
The readout error on the other side can be mitigated by classical postprocessing
\cite{Maciejewski2020,Cai2023}.
To consider the effect of noise on the Bell violation, we will 
consider a simple depolarization noise model.


\begin{table}[!b]
    \centering
    \begin{tabular}{lccc}
         \hline
         \hline\\[-6pt]
         & Single-qubit gate & Two-qubit gate & Readout\\
         IBM Eagle & $4.322\times 10^{-4}$ & $1.019\times 10^{-2}$ & $2.434\times 10^{-2}$ \\[2pt]
         \hline\\[-6pt]
         Google Sycamore &  &  & \\
         \quad isolated & $1.5\times 10^{-3}$ & $3.6\times 10^{-3}$ & $3.1\times 10^{-2}$ \\
         \quad simultaneous & $1.6\times 10^{-3}$ & $6.2\times 10^{-3}$ & $3.8\times 10^{-2}$\\[2pt]
         \hline
         \hline
    \end{tabular}
    \caption{Average error rates of IBM's Eagle processor \texttt{ibm\_brisbane} 
        \cite{IBMQ}
        and Google's Sycamore processor \cite{Arute2019}. Google gives the error rates for 
        the cases that the gates are performed isolated or simultaneously on all qubits.
        The error rates are averaged over all gates or the readout 
        error of all qubits.}
    \label{Tab_error_rates}
\end{table}


In the depolarization noise model, we assume that an error results, on average, in 
a maximally mixed state, i.e., the circuit for the graph state $\ket{G}$ prepares
the mixture
\begin{equation}
    \rho=\alpha\ketbra{G}+(1-\alpha)\frac{\Id}{2^{n}}.
\end{equation}
The probability that no error in the preparation
occurs is $\alpha=(1-p_{1})^{N_{1}}(1-p_{2})^{N_{2}}(1-p_{r})^{n}$,
where $p_{1}$ and $p_{2}$ are the average error rates for single- and two-qubit gates
and $p_{r}$ is the average readout error.
$N_{1}$ denotes the number of single-qubit gates. 
We note that idling qubits can be seen as an identity gate acting on the qubits.
Identity gates are single-qubit gates and the error can also be described by the
average error in the single-qubit gates \cite{IBMQ}.
$N_{2}$, in turn, is the number of two-qubit gates and $n$ the number of qubits that
are prepared.
The observed violation is thus
\begin{equation}
    \expval{\mathcal{B}}=\alpha\times Q_{n}.
\end{equation}


\begin{figure}[t]
    \centering
    \includegraphics{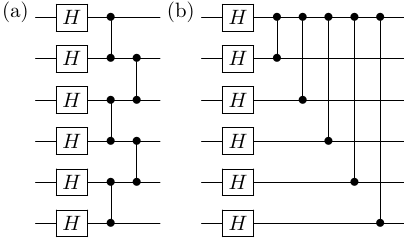}
    \caption{Quantum circuits to prepare (a) the LC state and (b) the GHZ state for
        $n=6$ qubits. The circuits make use of the Hadamard gate denoted by $H$ and CZ
        operations that are represented by the connected dots.
    \label{Fig_qcircuits_LC_GHZ_qiskit}}
\end{figure}


In Fig.~\ref{Fig_qcircuits_LC_GHZ_qiskit}, we show the circuits to prepare the LC and
GHZ states in the case of $n=6$ qubits.
For the LC state, the CZ gates can be performed simultaneously and thus the circuit
exhibits a constant depth of three, independent of the number of qubits.
The number of single-qubit gates is $N_{1}=n+2$, which includes two identity operations
for the idling qubits.
There are, in total, $n-1$ CZ gates, such that $N_{2}=n-1$.
For the GHZ state, however, the preparation with CZ gates requires the gates to be consecutively applied.
The circuit depth thus grows with the number of qubits.
In each step, $n-2$ of the qubits are idling.
Therefore, there are, in total, $(n-1)(n-2)$ identity operations, such that the preparation
requires, in total, $N_{1}=n+(n-1)(n-2)$ single qubit gates.
The number of CZ gates is equal, i.e., $N_{2}=n-1$.


\begin{table}[b]
    \centering
    \begin{tabularx}{\linewidth}{lCC}
         \hline
         \hline\\[-6pt]
         & Violation $\exs{\mathcal{B}}/Q$ & $L$ for $\gamma=5\sigma$ \\[2pt]
         \hline\\[-6pt]
         Google Sycamore &  & \\
         \quad LC state ($n=48$) & $0.1073$ & $1696$ \\
         \quad GHZ state ($n=53$) & $0.0982$ & $2024$ \\[2pt]
         IBM \texttt{ibm\_brisbane} &  & \\
         \quad LC state ($n=108$) & $0.0223$ & $39402$ \\
         \quad GHZ state ($n=127$) & $0.0139$ & $100501$ \\[2pt]
         \hline
         \hline
    \end{tabularx}
    \caption{Results for the depolarization noise model. For the noise data of Google's Sycamore
        \cite{Arute2019} and IBM's Eagle processor \cite{IBMQ}, we show the violations
        in terms of the quantum bound $Q$. The last column shows the number of random terms,
        $L$, that have to be sampled to verify the violation with a confidence of $5\sigma$.}
    \label{Tab_depolarisation_noise_model}
\end{table}


The results in Table~\ref{Tab_depolarisation_noise_model} show
that the simple noise model predicts a violation of approximately $0.1\times Q$
for Google's Sycamore processor and around $0.02\times Q$ for \texttt{ibm\_brisbane}.
We note, however, that we considered more qubits on the IBM machine.
Moreover, the violation of the GHZ state is smaller compared to the LC state.
This indicates that the GHZ state is more affected by noise.
A violation can accordingly be verified with 
$L\sim 2000$ measurement settings on Google's Sycamore processor.
On the IBM Eagle processor, in turn, $L\sim 40000$ settings are needed to verify
nonlocality in the LC state and $L\sim 100000$ for the GHZ state.
Real quantum computers, moreover, do not implement all gates natively.
IBM's Eagle processor, for example, does not support the CZ gate.
Rather, it has to be composed of the available gates.
The circuit in practice thus contains more gates and possibly exhibits a larger depth.
For these reasons, the noise model overestimates the violation.
However, the noise model is still useful to assess the scaling of the sample complexity.


\subsection{Sample complexity}


\begin{figure}[b]
    \includegraphics[width=\linewidth]{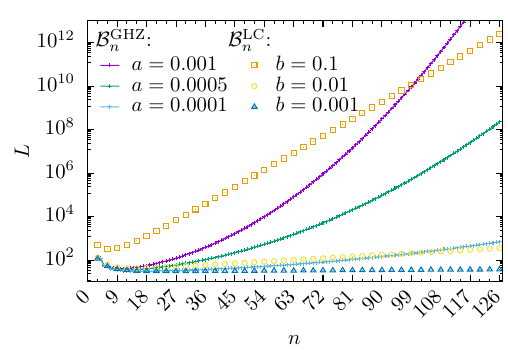}
    \caption{Necessary number of random observables $L$, which are measured $K=1$ times each, 
    such that an observed violation of at least $\exs{\hat{\B{}{}}}=\alpha Q_{n}$ 
    with $\alpha_{\text{GHZ}}=\exp(-an^{2})$ or $\alpha_{\text{LC}}=\exp(-bn)$
    has a confidence of $\gamma=5\sigma$.
    $L$ is plotted as a function of the number of qubits, $n$.
    \label{Fig_Necessary_L_GHZ_LC}}
\end{figure}


In the previous section, we have seen that noise causes the 
Bell violation to decline exponentially with the number of qubits, i.e., in leading order, we have
\begin{subequations}\label{Eq_scaling}
\begin{align}
    \alpha_{\text{GHZ}}&=\exp(-an^{2}),\\
    \alpha_{\text{LC}}&=\exp(-bn)
\end{align}
\end{subequations}
with $a,b>0$.
As we have seen in Eq.~\eqref{Eq_violation_p_lbound}, a violation of the Bell
inequality can still be observed for large $n$ if $\alpha$ vanishes slower than the fraction
$D_{n}^{-1}$, i.e.,
$\alpha>D_{n}^{-1}\overset{n\rightarrow\infty}{\rightarrow}0$.
We note that for the LC state, this is the case for $b<\ln(2)/3$.
In this parameter regime, the relative violation $\exs{\hat{\mathcal{B}}}/C$ of the Bell 
inequality for the LC state is still increasing with $n$.
For the GHZ state, however, the scaling in Eq.~\eqref{Eq_scaling} shows that 
the effect of noise increases faster than the quantum bound.
A violation of the Bell inequality for the GHZ state can thus only be observed for
$n<\ln(2)/(4a)+\sqrt{(\ln(2)/(4a))^{2}-\ln(2)/a}$.
In case a violation is observed, it is $t=\alpha Q_{n}-C_{n}$.
The number of random terms that is necessary to ensure a confidence level $\gamma$
is given by Eq.~\eqref{Eq_necessary_L} and takes the form
\begin{equation}\label{Eq_L_GHZ_LC}
    L\geq\left\lceil-\frac{2}{(\alpha -D_{n}^{-1})^{2}}\ln(1-\gamma)\right\rceil.
\end{equation}
We note that for the Bell inequalities in Sec.~\ref{Sec_Bell_ineqs}, the 
number of terms equals the quantum bound, i.e., $M=Q_{n}$.
In case $\alpha$ also decreases exponentially, the number of random settings, $L$,
increases exponentially. This is shown in Fig.~\ref{Fig_Necessary_L_GHZ_LC}.

For both the GHZ state as well as the LC state, we have plotted $L$ as a function 
of the number of qubits, $n$.
The scaling for the GHZ state is due to the number of single-qubit gates. 
We therefore choose the parameter $a$ of the same order as the error in the single-qubit gates,
i.e., $a=0.001, 0.0005$, and $0.0001$.
For the LC state, in contrast, all errors contribute to the leading term and we choose
$b=0.1,0.01$, and $0.001$.
Figure~\ref{Fig_Necessary_L_GHZ_LC} shows that for $a\leq 0.0001$ or $b\leq0.01$, 
the number of required measurements is feasible on current quantum computers.

\begin{figure}[!t]
    \includegraphics[width=\linewidth]{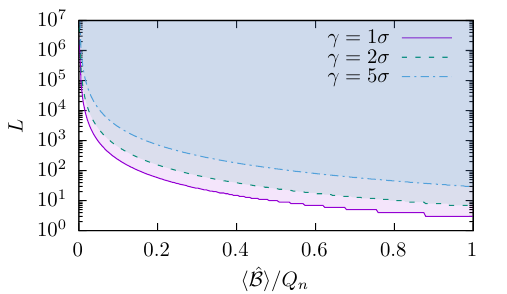}
    \caption{Confidence levels $\gamma$ for an observed violation $\exs{\hat{\B{}{}}}$
        with $L$ random terms ($K=1$) in case the classical bound $C_{n}$ is negligible
        compared to the observed violation $\exs{\hat{\mathcal{B}}}$.
    \label{Fig_confidence_level_large_n}}
\end{figure}

In addition, Eq.~\eqref{Eq_L_GHZ_LC} shows that with increasing $L$, a
decreasing violation $\exs{\hat{\B{}{}}}=\alpha Q_{n}$ with $\alpha\sim\mathcal{O}(L^{-1/2})$
can be verified.

Finally, if the classical bound $C_{n}$ is negligible compared to the observed violation, it does not have an effect on $L$.
We thus show the contours of the confidence levels $\gamma=1\sigma,2\sigma$, and $5\sigma$
in this limit for a given observation $\exs{\hat{\B{}{}}}$ with 
$L$ random terms in Fig.~\ref{Fig_confidence_level_large_n}.


\section{Simulation for an IBM quantum computer}\label{Sec_simulation_IBM}


\begin{figure*}[!t]
    \centering
    \includegraphics[width=\linewidth]{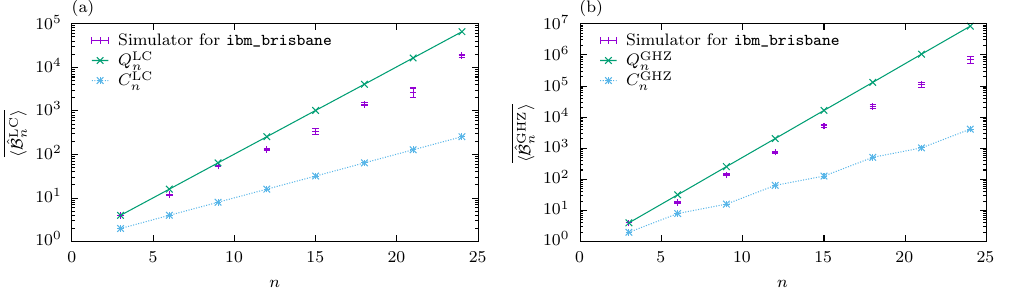}
    \caption{Simulation for the IBM Eagle quantum processor. The simulation uses the error rates 
    of the real device \texttt{ibm\_brisbane} \cite{IBMQ} and the violation is estimated by 
    measuring $L=800$ random terms of the Bell inequality $K=1$ times each. 
    In (a), the average expectation value of the Bell inequality $\mathcal{B}^{\text{LC}}$ is shown 
    for the LC state, whereas (b) shows the violation of $\mathcal{B}^{\text{GHZ}}$ for the GHZ state. 
    In both cases, the expectation values are averaged over $10$ repetitions and
    the error bars show the standard deviation.
    \label{Fig_IBM_noisy_simulation_LC_GHZ}}
\end{figure*}


Finally, we simulate the Bell inequalities of the LC and the GHZ states for
the IBM Eagle quantum processor.
For this purpose, we use the Qiskit AerSimulator \cite{Qiskit} with the noise data of the
quantum computer \texttt{ibm\_brisbane} available at \cite{IBMQ}.
In Fig.~\ref{Fig_qcircuits_LC_GHZ_qiskit}, we show the ideal circuits
to prepare the LC and the GHZ states for $n=6$ qubits.
The advantage of the LC state is that it can be prepared by a circuit of constant depth
of three, whereas the step complexity for the GHZ state increases with the number of qubits,
$n$.
We note, however, that IBM's Eagle processor does not implement the Hadamard and CZ gates 
natively.
Rather, the gates have to be composed in terms of the available gate set.
In practice, the circuits thus involve more gates and exhibit a larger depth.

After the preparation, $L$ random terms of the corresponding Bell inequality are measured.
The measurement of each random term is not repeated, i.e., $K=1$.
Figure~\ref{Fig_IBM_noisy_simulation_LC_GHZ} shows the average expectation values of the Bell inequalities 
for the LC and GHZ states of up to $n=24$ qubits.
We have chosen $L=800$ random terms and the average is taken over $10$ repetitions.
Moreover, the number of qubits is a multiple of three as only in this
case is a good Bell inequality for the LC state known.
Figure~\ref{Fig_IBM_noisy_simulation_LC_GHZ} shows that for both states, the simulation
predicts a Bell violation that increases exponentially with $n$.
The LC state, however, shows a slightly higher relative violation compared to the GHZ state.
This can be seen in Fig.~\ref{Fig_IBM_noisy_simulation_LC_GHZ_extrapolation}(a).
The plot in Fig.~\ref{Fig_IBM_noisy_simulation_LC_GHZ_extrapolation}(a) displays the observed
expectation value as a fraction of the quantum bound, i.e.,
$\exs{\hat{\mathcal{B}}}/Q$.
In agreement with the scaling that is predicted by the depolarization noise model, 
we fit the logarithmic data to a linear function for the LC state and a quadratic 
function for the GHZ state.
The fits yield
\begin{subequations}\label{Eq_violation_exp_fit}
\begin{align}
    \frac{\exs{\hat{\mathcal{B}}_{n}^{\text{LC}}}}{Q_{n}^{\text{LC}}}&=\exp(-0.078n+0.248),\\
    \frac{\exs{\hat{\mathcal{B}}_{n}^{\text{GHZ}}}}{Q_{n}^{\text{GHZ}}}&=\exp(-0.001n^{2}-0.078n+0.154).
\end{align}
\end{subequations}
This affirms that the relative violation of the GHZ state decreases faster with $n$
compared to the LC state.
We attribute this to the larger circuit depth that is required for the GHZ state.
The preparation of the GHZ state is thus more affected by noise.
The smaller relative violation is also the reason for the larger $p$ values 
for the GHZ state, which we present in Fig.~\ref{Fig_IBM_noisy_simulation_LC_GHZ_pvalue}.
Except for the cases of $n=3,15$, the $p$ values of the LC state are smaller, which implies
a higher significance of the observed violation.
Overall, the $p$ values show a large variation.
We note that the $p$ values depend on the observed violation.
Apparently, for some numbers of qubits, $n$, the evaluation of the Bell inequalities 
is more affected by noise.
We attribute this to the mapping of the actual architecture.
For different sizes of the state, the optimal choice of qubits might be differently 
affected by noise.
This also explains the variation of the error bars in Fig.\,
\ref{Fig_IBM_noisy_simulation_LC_GHZ}.
In case the evaluation is more affected by noise, not only is the observed violation smaller
but it also appears reasonable that the value fluctuates more.
In addition, we note that the error bars in 
Fig.~\ref{Fig_IBM_noisy_simulation_LC_GHZ_extrapolation}(a) do not match
the fitted exponential. 
This is because the plotted error bars show the standard deviation of the repeated
simulations and do not cover all uncertainties. 
For example, the error bars do not include the uncertainties due to the different
mappings of the quantum circuits to the actual architecture.


\begin{figure}[!b]
    \centering
    \includegraphics{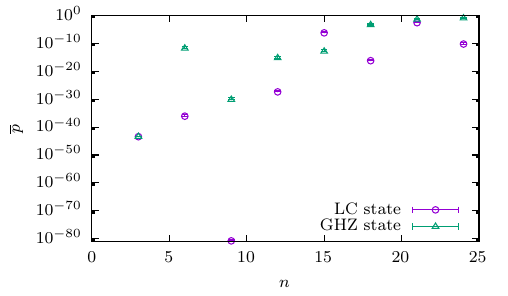}
    \caption{Average $p$ value of the results for the LC and GHZ states in 
        Fig.~\ref{Fig_IBM_noisy_simulation_LC_GHZ}. The average is taken over the $p$ values
        of the $10$ repetitions. The error bars denote the standard deviation. 
        We plot only the top error bars as we are interested in the uncertainty to
        larger $p$ values.
    \label{Fig_IBM_noisy_simulation_LC_GHZ_pvalue}}
\end{figure}


\begin{figure*}[!t]
    \centering
    \includegraphics[width=\linewidth]{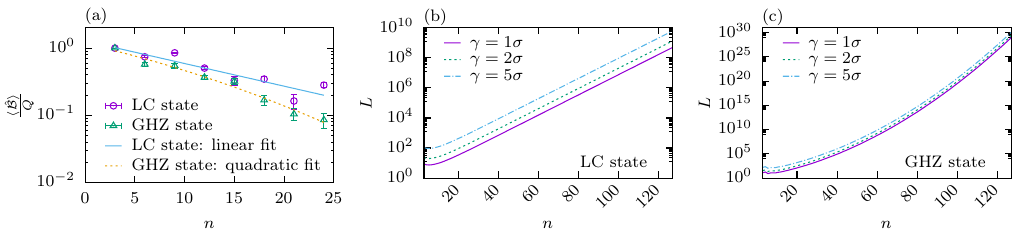}
    \caption{(a) Observed violation as a fraction of the quantum bound, i.e.,
        $\exs{\hat{\mathcal{B}}}/Q$. The logarithmic data are fitted
        by a linear function for the LC state and a quadratic function for the GHZ state.
        From the fitted function, we estimate the number of random terms, $L$,
        that have to be sampled to reach a confidence of $\gamma=1\sigma,2\sigma,5\sigma$. (b) The
        results for the LC state; (c) the predictions for the GHZ state.
    \label{Fig_IBM_noisy_simulation_LC_GHZ_extrapolation}}
\end{figure*}


Finally, we use Eq.~\eqref{Eq_violation_exp_fit} to extrapolate the violation to
larger $n$.
We note that the circuits have to be adapted to the architecture of the quantum
computer.
For this task, we use the automatic transpilation provided in Qiskit, 
which involves an optimization such that the qubits are chosen that are least 
affected by noise.
The extrapolation overestimates the violation for large $n$, since in this case such a choice is no longer possible.
Hoeffding's inequality in Eq.~\eqref{Eq_p_value_Ubound_rep} yields the following for $K=1$ and 
a target value for the confidence $\gamma$:
\begin{equation}
    L(n,\gamma)\geq \left\lceil-\frac{2}{(t^{2}/M^{2})(n)}\ln(1-\gamma)\right\rceil.
\end{equation}
The number of necessary sampled terms, $L$, is shown in 
Fig.~\ref{Fig_IBM_noisy_simulation_LC_GHZ_extrapolation}(b) for
the LC state and in Fig.~\ref{Fig_IBM_noisy_simulation_LC_GHZ_extrapolation}(c) for the
GHZ state.

As an example, we discuss $L$ for the case of the $n=108$-qubit LC state, i.e.,
the largest simple path for the architecture. 
For better comparison, we also calculate $L$ for the $n=108$-qubit GHZ state.
To reach a target confidence of
$\gamma=5\sigma$, the values are
\begin{subequations}
\begin{align}
    L_{\text{LC}}(n=108,\gamma=5\sigma)&=330997173,\\
    L_{\text{GHZ}}(n=108,\gamma=5\sigma)&=1.825\times 10^{23}.
\end{align}
\end{subequations}
The nonlocality of the LC state can be verified by a large number of measurement settings.
For the GHZ state, in turn, the extrapolation also predicts a violation.
To verify this violation, however, requires an
infeasible number of measurements.
This shows again that the GHZ state is more affected by noise than the LC state and that the LC state is thus more promising to detect large multipartite nonlocality.
The above numbers are still smaller than $M_{\text{LC}}\approx 10^{21}$
and $M_{\text{GHZ}}\approx 10^{32}$, but much larger than the values predicted in
Sec.~\ref{Sec_Analysis_LC_GHZ}.
We attribute this to the fact that the generic depolarization noise model does not cover all the noise in a specific quantum computer.
Recall, however, that our main goal is to detect large-scale quantum nonlocality and our method can still be used to assess the confidence of an observed
Bell violation.
Of course, the verification with a certain confidence may require more measurements for specific computers. 
In the worst case (e.g., when $\alpha$ drops and the $p$ value is not sufficient), one just needs to make additional measurements.


\section{Conclusion}\label{Sec_conclusion}


We have demonstrated a method to test $n$-partite Bell nonlocality on quantum computers.
Quantum computers often have restricted two-qubit connectivity.
We have thus pointed out that graph states are a natural choice of nonlocal states
that can be readily prepared if the graph is a subgraph of the connectivity graph.
Moreover, for certain graph states, good Bell inequalities are known.
These Bell inequalities allow for an exponential violation of the classical bound, 
but, in turn, also typically require an exponential number of measurements.
On the one hand, the exponential violation makes them increasingly robust to noise.
On the other hand, it is impossible to measure all terms in an experiment.
We have solved this problem by proposing a method in the manner of randomized measurements, e.g.,
direct fidelity estimation \cite{Flammia2011,Cao2023} or few-copy entanglement 
detection \cite{Saggio2018}. 
By sampling the terms of the Bell operator at random, the number of measurements can 
be drastically reduced.
The violation can, however, still be verified with high significance.
We have gauged the significance of a result with Hoeffding's inequality. 
It thus dependents on the violation that is observed.
To assess the usefulness of the method on real devices, we have first used a simple 
depolarization noise model to estimate realistic violations.
Finally, we have simulated the method for the IBM Eagle quantum processor.
As expected with increasing accuracy of the noise model, the predicted violation shrinks.
However, also, the simulator of the IBM processor predicts the number of terms that have to be sampled
to be much smaller than the total number of terms in the Bell inequalities.

Our method will hence be useful to verify Bell violations in quantum systems of many
qubits.
This includes current quantum computers in the NISQ regime, e.g., the quantum computers
accessible at IBM Quantum \cite{IBMQ}.
In addition, the observed Bell violation can be used to benchmark and compare different 
quantum computers.
The Bell violation can be interpreted as a measure for the nonclassical correlations 
that can be produced.
The preparation of the associated state depends on the connectivity of the quantum computer.
We thus can benchmark the nonclassical correlations for states that require 
different levels of two-qubit connectivity.
Finally, we stress that our method is not restricted to qubits and can be readily applied
to Bell inequalities with higher local dimension.

Furthermore, the method could also be refined. 
For example, as the Bell inequalities only include stabilizers of the graph state,
all observables commute.
It might thus be feasible to find a (possibly very complicated) positive operator valued measure (POVM) to simultaneously measure 
all of the terms.

Moreover, it could be interesting to analyze the Bell inequalities with other 
statistical methods.
Instead of the $p$ value, one might look at the Kullback-Leibler divergence that has been used to 
assess the statistical strength of Bell inequalities for few parties \cite{VanDam2005}.

The relation to other benchmarks, e.g., the quantum volume \cite{Moll2018,Baldwin2022} 
or the layer fidelity \cite{Mckay2023}, is also yet to be explored.
In particular, the layer fidelity can be measured by benchmarking a linear string of qubits of the quantum computer.

\textit{Note added}. Recently, similar ideas have been
discussed in \cite{Wang2024}.


\section*{Acknowledgements}


The authors would like to thank
Lina Vandr\'e,
H. Chau Nguyen,
Mariami Gachechiladze,
Konrad Szyma\'nski,
Ties Ohst,
Kiara Hansenne,
and 
Carlos de Gois
for useful discussions and comments.
We acknowledge the use of IBM Quantum services for this work. The views expressed are those of the authors, and do not reflect the official policy or position of IBM or the IBM Quantum team.
This work has been supported by the Deutsche Forschungsgemeinschaft (DFG, German Research Foundation, Projects No. 447948357 and No. 440958198), the Sino-German Center for Research Promotion (Project No. M-0294), and the German Ministry of Education and Research (Project QuKuK, BMBF Grant No. 16KIS1618K). 
J.L.B. acknowledges support from the House of Young Talents of the University of Siegen. A.C. is supported by the EU-funded project \href{https://www.doi.org/10.3030/101070558}{FoQaCiA} Foundations of Quantum Computational Advantage and the \href{https://www.doi.org/10.13039/501100011033}{MCINN/AEI} (Project No.\ PID2020-113738GB-I00).


\appendix


\section{Unbiased estimators}\label{App_estimators}


In this appendix, we show that the estimators used in the main text are unbiased.


\subsection{Estimator in the infinite measurement limit}\label{App_estimator_inf}


First, we assume that the expectation values can be inferred directly, i.e.,
that we can repeat the measurement of the operator infinite times.
In this case, the estimator is given by Eq.~\eqref{Eq_estimator_inf}.
The expectation value has to be calculated with respect to the random 
variables $J_{l}$. 
With $\Es{\exs{B_{J_{l}}}}=\sum_{j=1}^{M}p(J_{l}=j)\exs{B_{j}}
=\frac{1}{M}\sum_{j=1}^{M}\exs{B_{j}}$, we obtain
\begin{equation}
\begin{split}
    \Es{\exs{\hat{\mathcal{B}}}_{\infty}}
    =&\frac{M}{L}\sum_{l=1}^{L}\Es{\exs{B_{J_{l}}}}
    =\frac{M}{L}\sum_{l=1}^{L}\frac{1}{M}\sum_{j=1}^{M}\exs{B_{j}}\\
    =&\sum_{j=1}^{M}\exs{B_{j}}
    =\exs{\mathcal{B}}.
\end{split}
\end{equation}


\subsection{Estimator for finite repetitions}\label{App_estimator_rep}


To calculate the expectation value of the estimator in Eq.~\eqref{Eq_estimator_rep},
we note that both the measurement outcomes $b_{j}$ and the index $J$ of the terms 
are random variables.
Hence, the expectation value of the estimator has to be taken over both
the measurement outcomes as well as the random picking, i.e., over $J$.
To evaluate the expectation value, we can thus make use of the law of iterated expectation. That is,
\begin{equation}
    \mathbb{E}[\ldots]=\mathbb{E}_{J}\left\{\mathbb{E}_{b_{J_{l}}}[\ldots|J_{l}=J]\right\}.
\end{equation}
This results in
\begin{equation}
\begin{split}
    \mathbb{E}[\exs{\hat{\mathcal{B}}}]
    &=\frac{M}{KL}\sum_{l=1}^{L}\sum_{k=1}^{K}\mathbb{E}_{J}\Big\{
        \underbrace{\mathbb{E}_{b_{J_{l}}}[b_{J_{l}}^{(k)}|J_{l}=J]}_{=\expval{B_{J}}}\Big\}\\
    &=\frac{M}{KL}\sum_{l=1}^{L}\sum_{k=1}^{K}\mathbb{E}_{J}[\expval{B_{J}}]\\
    &=\frac{M}{KL}\sum_{l=1}^{L}\sum_{k=1}^{K}\underbrace{\sum_{j=1}^{M}p(j)\expval{B_{j}}}_{
        =\sum_{j=1}^{M}\frac{1}{M}\expval{B_{j}}}\\
    &=\sum_{j=1}^{M}\expval{B_{j}}=\expval{\mathcal{B}}.
\end{split}
\end{equation}


\section{Hoeffding's inequality}\label{App_Hoeffding}


\subsection{Estimator in the infinite measurement limit}\label{App_Hoeffding_inf}


The estimator in Eq.~\eqref{Eq_estimator_inf} can be written as a sum of random
variables as follows:
\begin{equation}
    \exs{\hat{\mathcal{B}}}_{\infty}=\sum_{l=1}^{L}\underbrace{\frac{M}{L}
        \exs{B_{J_{l}}}}_{\eqqcolon X_{l}}.
\end{equation}
Since each term $B_{j}$ in the Bell operator is a tensor product of Pauli operators, 
$\exs{B_{j}}\in [-1,1]$ and thus $-\frac{M}{L}=a_{l}\leq X_{l}\leq b_{l}=\frac{M}{L}$.
Moreover, the bounded random variables $X_{l}$ are independent,
as they are obtained from different experimental runs.
We can thus use Hoeffding's inequality \cite{Hoeffding1963}, which states that
\begin{equation}
\begin{split}
    \Probs{\exs{\hat{\mathcal{B}}}_{\infty}-\exs{\mathcal{B}}\geq t}
    &\leq\exp(-\frac{2 t^{2}}{\sum_{l=1}^{L}(b_{l}-a_{l})^{2}})\\
    &=\exp(-\frac{2t^{2}}{\sum_{l=1}^{L}(\frac{2M}{L})^{2}})\\
    &=\exp(-\frac{t^{2}}{2M^{2}}L).
\end{split}
\end{equation}


\subsection{Estimator for finite repetitions}


As in the previous section, we can apply Hoeffding's inequality
to the estimator in Eq.~\eqref{Eq_estimator_rep}.
Also the estimator in Eq.~\eqref{Eq_estimator_rep} is a sum of independent 
random variables,
\begin{equation}
    \exs{\hat{\mathcal{B}}}=\sum_{l=1}^{L}\sum_{k=1}^{K}
    \underbrace{\frac{M}{KL}b_{J_{l}}^{(k)}}_{\eqqcolon Y_{kl}}.
\end{equation}
Since the outcomes $b_{J_{l}}^{(k)}$ are obtained from different experimental runs,
they are independent and thus are the random variables $Y_{kl}$.
In addition, the outcomes can only take the values 
$b_{J_{l}}^{(k)}\in\{-1, 1\}$.
Therefore, we have that the random variables $Y_{kl}$ are bounded as 
$-\frac{M}{KL}=a_{kl}\leq Y_{kl}\leq b_{kl}=\frac{M}{KL}$.
Finally, we get from Hoeffding's inequality,
\begin{equation}
\begin{split}
    \Probs{\exs{\hat{\mathcal{B}}}-\exs{\mathcal{B}}\geq t}
    &\leq\exp(-\frac{2t^{2}}{\sum_{l=1}^{L}\sum_{k=1}^{K}(b_{kl}-a_{kl})^{2}})\\
    &=\exp(-\frac{2t^{2}}{\sum_{l=1}^{L}\sum_{k=1}^{K}(\frac{2M}{KL})^{2}})\\
    &=\exp(-\frac{t^{2}}{2M^{2}}KL).
\end{split}
\end{equation}


\begin{figure}[t]
    \centering
    \includegraphics{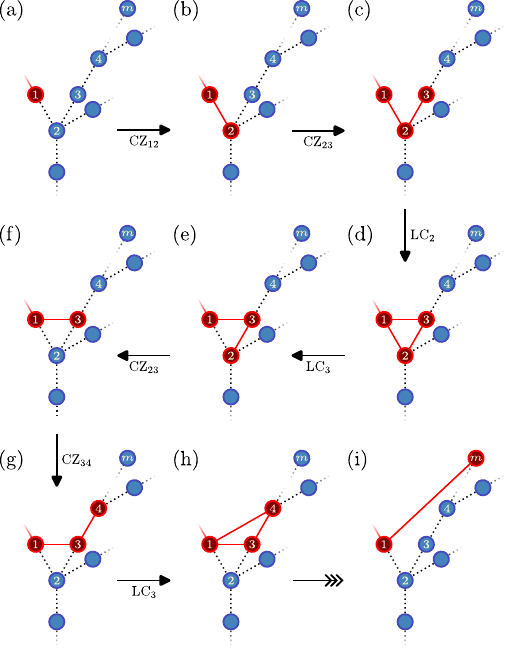}
    \caption{Scheme to decompose the CZ gate between arbitrary qubits $1$ and $m$ into
        a sequence of CZ gates between adjacent qubits and local complementations.
        The dotted lines denote the CZ gates that can be performed, whereas
        the red lines indicate the graph state.
        Qubit $1$ can already be coupled to different qubits, whereas the qubits $2,\ldots m$
        have to be uncoupled.
    \label{Fig_lemma_LC_prep}}
\end{figure}


\section{Preparation scheme for the LC state}\label{App_prep_LC_state}


We discuss a scheme to prepare a LC state with all qubits of an $n$-qubit quantum 
computer. This can be done by preparing all qubits in the $\ket{+}$ state and then 
applying CZ gates between some of them. A problem can be that the connectivity
of the quantum computer does not allow one to perform a specific gate between qubits
$i$ and $j$ directly. The following lemma shows that this is not a fundamental problem.

\begin{lemma}\label{Lemma_CZ_decomposition}
Consider a qubit array with a connected connectivity graph, where a CZ gate should
be applied to two qubits for graph state generation from the state $\ket{+}^{\otimes n}$. 
This can be achieved by a sequence of CZ gates between adjacent qubits 
(in the sense of the connectivity graph) and local complementations.
\end{lemma}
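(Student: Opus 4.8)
The plan is to realize the desired $\CZ{1}{m}$ as an elementary graph operation -- adding the edge $(1,m)$ to the graph state built from $\ket{+}^{\otimes n}$ -- and to reach it by ``sliding'' a single edge along a connectivity path using only the two tools allowed by the lemma. I first recall the graphical meaning of these tools: a physical $\CZ{i}{j}$ on connectivity-adjacent qubits toggles the single edge $(i,j)$, while a local complementation at a vertex $v$ toggles every edge inside its neighborhood $\mathcal{N}(v)$ and, being a local Clifford, leaves the entanglement structure -- and in particular any edge not contained in $\mathcal{N}(v)$ -- untouched. Because the connectivity graph is connected, I fix a path $1,2,\dots,m$ of connectivity-adjacent qubits joining the two target qubits, chosen (as in Fig.~\ref{Fig_lemma_LC_prep}) so that the routing qubits $2,\dots,m$ are currently uncoupled in the graph state, while qubit $1$ may already carry edges to other parts of the graph.

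The core of the argument is an edge-sliding primitive. Suppose the (possibly long-range) graph-state edge $(1,u)$ is present, that $u$ has $1$ as its only neighbor in the graph state, and that $w$ is an uncoupled qubit that is connectivity-adjacent to $u$. I claim the four-step sequence
\begin{equation}
  \CZ{u}{w},\quad \mathrm{LC}(u),\quad \mathrm{LC}(w),\quad \CZ{u}{w}
\end{equation}
replaces $(1,u)$ by $(1,w)$ and returns $u$ to the uncoupled state, using only the adjacent gate $\CZ{u}{w}$ and local complementations. I would verify this by tracking the neighborhoods: the first $\CZ{u}{w}$ makes the short path $1-u-w$ so that $\mathcal{N}(u)=\{1,w\}$; then $\mathrm{LC}(u)$ adds exactly the edge $(1,w)$, producing the triangle on $\{1,u,w\}$; now $\mathcal{N}(w)=\{1,u\}$, so $\mathrm{LC}(w)$ removes exactly $(1,u)$; and the final $\CZ{u}{w}$ deletes $(u,w)$, leaving $(1,w)$ alone and $u$ isolated.

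With this primitive, I would prove the lemma by induction along the path. The base step is the physical gate $\CZ{1}{2}$, which creates $(1,2)$. Assuming the edge $(1,j)$ has been created, with $2,\dots,j-1$ returned to uncoupled and $j$ having $1$ as its only graph-state neighbor, one application of the sliding primitive with $(u,w)=(j,j+1)$ advances the edge to $(1,j+1)$ and re-isolates $j$; iterating up to $w=m$ yields precisely the single edge $(1,m)$, i.e.\ the action of $\CZ{1}{m}$. The step that requires real care -- and which I expect to be the main obstacle -- is the verification that the primitive has no side effects: one must check that at every stage the relevant neighborhoods are exactly the two-element sets $\{1,w\}$ and $\{1,u\}$, so that each local complementation toggles only the intended long-range edge and never creates spurious edges among the routing qubits, nor disturbs the pre-existing edges of qubit $1$ (which never lie inside any complemented neighborhood). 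This no-side-effect property rests on the invariant -- maintained by the re-isolation built into the sliding step -- that the routing qubits $2,\dots,m$ remain uncoupled throughout, which is exactly the configuration guaranteed by Fig.~\ref{Fig_lemma_LC_prep}.
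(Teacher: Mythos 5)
Your proposal is correct and follows essentially the same route as the paper: your four-operation sliding primitive $\CZ{u}{w},\ \mathrm{LC}(u),\ \mathrm{LC}(w),\ \CZ{u}{w}$ is exactly the paper's step (2a)--(2d), and your induction along the path $1,2,\ldots,m$ mirrors its iterative $\text{PAIR}(1,k)\rightarrow\text{PAIR}(1,k+1)$ construction. Your explicit verification that the complemented neighborhoods are exactly $\{1,w\}$ and $\{1,u\}$ (so that qubit $1$'s pre-existing edges are never disturbed) is a welcome elaboration of a point the paper leaves implicit, but it is not a different argument.
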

\begin{proof}
    We give an explicit construction that is visualized in
    Fig.~\ref{Fig_lemma_LC_prep}.
    The initial state is shown in Fig.~\ref{Fig_lemma_LC_prep}(a).
    We would like to perform a CZ gate between qubits $1$ and $m$, i.e., 
    $\text{CZ}_{1m}$.
    The interaction topology, however, does not allow a direct coupling.
    Rather, the qubits $1$ and $m$ are connected by the path of qubits $1,2,\ldots m$.
    Here, all the qubits  $1,2,\ldots m$ should be in the $\ket{+}$ state; 
    in particular, it is important that no CZ gate has been applied yet to the qubits $2,\ldots m$.
    In this case, we can apply the following scheme:
    \begin{enumerate}
        \item[(1)] Connect qubit $2$ by performing the $\text{CZ}_{12}$ gate
            to generate the first $\text{PAIR}(1,2)$ [Fig.~\ref{Fig_lemma_LC_prep}(b))].
        \item[(2)] While $k<m$ transform $\text{PAIR}(1,k)\rightarrow \text{PAIR}(1,k+1)$
            by the following:
        \begin{enumerate}
            \item Couple qubit $k$ and $k+1$ by $\text{CZ}_{k,k+1}$ 
                [Fig.~\ref{Fig_lemma_LC_prep}(c)].
            \item Couple qubit $1$ and $k+1$ by a local complementation on qubit $k$, i.e.,
                $\text{LC}_{k}$ [Fig.~\ref{Fig_lemma_LC_prep}(d)].
            \item Cancel the CZ gate between qubits $1$ and $k$ by performing $\text{LC}_{k+1}$
                [Fig.~\ref{Fig_lemma_LC_prep}(e)].
            \item Cancel the CZ gate between qubits $k$ and $k+1$ by 
                the controlled-Z $\text{CZ}_{k,k+1}$
                [Fig.~\ref{Fig_lemma_LC_prep}(f)].
        \end{enumerate}
    \end{enumerate}
    This allows one to decompose the $\text{CZ}_{1m}$ gate into a sequence
    with circuit depth $3(m-2)+1$.
    We note that step (2) is only necessary for $m>1$ and requires three steps 
    as the local complementations in (b) and (c) can be combined.
\end{proof}

Lemma \ref{Lemma_CZ_decomposition} can be used to construct a linear cluster state
on an arbitrary interaction topology.
\begin{observation}\label{Obs_LC_preparation}
    On a quantum computer of $n$ qubits, it is possible to prepare an
    $n$-qubit LC state with $\mathcal{O}(n)$ circuit depth.
\end{observation}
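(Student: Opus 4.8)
The plan is to realize the linear cluster state as the graph state of the path $P_n = v_1 - v_2 - \cdots - v_n$, prepared from $\ket{+}^{\otimes n}$ by applying the $n-1$ edge gates $\CZ{v_i}{v_{i+1}}$. When two logically consecutive qubits happen to be physically adjacent, the corresponding CZ is applied directly; the only obstruction is a pair $v_i, v_{i+1}$ that is not directly coupled, and this is exactly the situation handled by Lemma~\ref{Lemma_CZ_decomposition}, which routes the gate along a path of the connectivity graph at the price of a depth proportional to the length of that path. First I would fix a spanning tree $T$ of the (connected) connectivity graph and label the qubits $v_1, \ldots, v_n$ by a depth-first pre-order of $T$, so that every routing path I need is a path inside $T$.

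The delicate point is the hypothesis of Lemma~\ref{Lemma_CZ_decomposition}: to route $\CZ{v_i}{v_{i+1}}$ through the tree path joining the two qubits, all interior qubits of that path (and its far endpoint) must still be uncoupled. To guarantee this I would create the edges in reverse order, $i = n-1, n-2, \ldots, 1$. At the step that builds $(v_i, v_{i+1})$ the already-coupled qubits are exactly $\{v_{i+1}, \ldots, v_n\}$, so it suffices that $v_i$ and every interior vertex of the connecting path carry index at most $i$. This is precisely what the pre-order gives: if $v_{i+1}$ is not a child of $v_i$, then $v_{i+1}$ is attached to an ancestor $a$ of $v_i$, and the tree path runs $v_{i+1} \to a \to \cdots \to v_i$ through ancestors of $v_i$ only, all of which were discovered before $v_i$ and hence have index $\le i$. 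Thus the role of ``qubit $1$'' in the lemma is played by the already-coupled endpoint $v_{i+1}$, while every remaining qubit on the path is uncoupled, so the precondition holds at every step.

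For the depth, each application of the lemma costs $\mathcal{O}(\ell_i)$ with $\ell_i = d_T(v_i, v_{i+1})$, and the steps are carried out sequentially, so the total depth is $\mathcal{O}\big(\sum_{i=1}^{n-1} d_T(v_i, v_{i+1})\big)$. I would bound this sum by the standard Euler-tour estimate: the portions of the tour between consecutive pre-order discoveries are disjoint sub-walks of the closed walk that traverses every edge of $T$ exactly twice, whence $\sum_i d_T(v_i, v_{i+1}) \le 2(n-1)$. Adding the initial layer of Hadamard gates that produces $\ket{+}^{\otimes n}$, the whole preparation has depth $\mathcal{O}(n)$, as claimed.

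I expect the main obstacle to be the bookkeeping of the second step, namely arguing that the uncoupled-qubit hypothesis of Lemma~\ref{Lemma_CZ_decomposition} is met at each of the $n-1$ routing operations. The reverse pre-order is chosen precisely so that the set of coupled qubits is always a suffix $\{v_{i+1}, \ldots, v_n\}$ of the labeling and every routing path stays inside the complementary prefix; once this invariant is established, the reduction to building path edges and the linear-depth count follow routinely.
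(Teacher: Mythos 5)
Your proof is correct and follows essentially the same route as the paper's: fix a spanning tree of the connectivity graph, order the qubits by a depth-first traversal, realize each consecutive pair either by a direct CZ or by invoking Lemma~\ref{Lemma_CZ_decomposition} along a tree path with uncoupled interior, and bound the total routing length linearly via the Euler tour. Your reverse pre-order construction is a somewhat cleaner way of certifying the uncoupled-qubit hypothesis of the lemma at every step (the paper instead grows the chain forward from a leaf and jumps into still-uncovered branches), but the decomposition, the key lemma, and the $\mathcal{O}(n)$ depth accounting are the same.
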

\begin{proof}
    The connectivity of a quantum computer is a connected graph $G$.
    It thus has a spanning tree, i.e., a tree graph that covers all vertices
    of $G$.
    In turn, a tree graph can be covered by a LC state by the following steps.
    At the start, all qubits are assumed to be prepared in the state $\ket{+}^{\otimes n}$.
    \begin{enumerate}
        \item[(1)] We start at a leaf and successively couple the adjacent qubits in the 
            direction of the root by CZ operations.
        \item[(2)] At a branch-off, check whether the other branch has already been covered.
            If all other branches have already been covered, we continue step (1) in the 
            direction of the root.
            Otherwise, Lemma \ref{Lemma_CZ_decomposition} can be used to couple the 
            last qubit to a leaf in the uncovered branch.
            From there, we can continue again with step (1).
    \end{enumerate}
    The scheme is shown for an exemplary two-qubit connectivity in 
    Fig.~\ref{Fig_LC_prep_tree}.
    To investigate the circuit depth, we note that step (1) and (2) are executed 
    alternately. 
    We thus count the number of steps for each run.
    $k_{i}$ denotes the number of steps for the $i$th execution of step (1), whereas
    $l_{i}$ stands for the steps required for the $i$th execution of step (2).
    In step (1), adjacent qubits are consecutively coupled by CZ gates.
    We thus have $\sum_{i}k_{i}<n$.
    In each step (2), a qubit at distance $m_{i}$ is coupled and, from 
    Lemma \ref{Lemma_CZ_decomposition} we know that $l_{i}=3(m_{i}-2)+1$.
    As each branch is only passed once, we have $\sum_{i}(m_{i}-2)\leq n$.
    Moreover, there are less than $n$ branch-offs, i.e., $\sum_{i}1\leq n$.
    Therefore, we obtain $\sum_{i}l_{i}=\sum_{i}3(m_{i}-2)+1\leq 4n$.
    The final circuit depth of the scheme is thus upper bounded by
    $\sum_{i}(k_{i}+l_{i})\leq n+4n=5n$.
\end{proof}


\begin{figure}[t]
    \centering
    \includegraphics{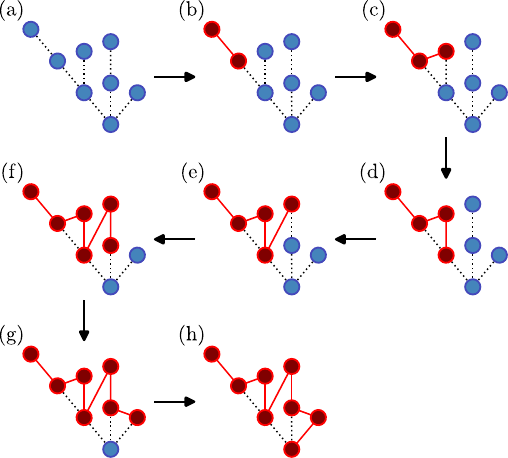}
    \caption{Illustration of the scheme in Observation~\ref{Obs_LC_preparation} for
        an exemplary two-qubit connectivity of eight qubits.
        The dotted lines denote the CZ gates that can be performed, whereas
        the red lines indicate the graph state.
    \label{Fig_LC_prep_tree}}
\end{figure}


\bibliography{literature}


\end{document}